\documentclass{article}

\usepackage[preprint]{neurips_2026}
\workshoptitle{Third Workshop on Agents in the Wild: Safety, Security, and Beyond}

\usepackage[utf8]{inputenc}
\usepackage[T1]{fontenc}
\usepackage{hyperref}
\hypersetup{hidelinks,hypertexnames=false}
\usepackage{url}
\usepackage{booktabs}
\usepackage{amsfonts}
\usepackage{amsmath}
\usepackage{amssymb}
\usepackage{amsthm}
\usepackage{microtype}
\usepackage{graphicx}
\usepackage{xcolor}
\usepackage{multirow}
\usepackage{array}
\usepackage{float}
\usepackage{placeins}

\theoremstyle{plain}
\newtheorem{proposition}{Proposition}
\theoremstyle{definition}

\newcommand{\DT}{D_T}
\newcommand{\DI}{D_I}
\newcommand{\Gap}{G}
\newcommand{\cmark}{\checkmark}
\newcommand{\xmark}{--}
\title{Structurally Close, Temporally Distant: Measuring Security Exposure in Long-Horizon LLM Agents}
\author{%
  Md Jafrin Hossain\thanks{Corresponding author: \texttt{mdjafrin.hossain@uwa.edu.au}} \\
  The University of Western Australia \\
  \And
  Nur Al Hasan Haldar \\
  The University of Western Australia \\
}

\begin{document}
\raggedbottom

\maketitle

\begin{abstract}
Long-horizon LLM agents interact with untrusted content, persistent memory, external state, and sensitive tools. Existing analyses commonly characterize an attack by the number of execution steps separating malicious input from a downstream action. We show that temporal remoteness can substantially overstate security separation in stateful agents. We introduce a provenance-aware execution graph that connects agent events through deterministic state, identifier, and tool provenance, and define \emph{influence distance} $\DI$ as the shortest structural path between an untrusted source and a sensitive action. We compare it against \emph{sequence distance} $\DT$, the shortest injection--sink path in the ordered trajectory. Because the influence graph contains every edge of the sequence graph, $\DI \leq \DT$; the gap $\Gap = \DT - \DI$ quantifies how much the step-count view overstates separation. Across 454 injection--sink pairs from 360 long-horizon AgentDojo trajectories on OpenAI's \texttt{gpt-4o-mini} and \texttt{gpt-4o} and Claude's Haiku 4.5 and Sonnet 4.6, $\Gap > 0$ for 96.9\% of pairs, with a median gap of 9 hops; 91.0\% remain decoupled after removing the largest provenance-only edge class. In a cross-domain evaluation on AgentDojo's banking suite, 33.8\% of 231 pairs from 377 trajectories decouple through different provenance mechanisms. Among the 274 pairs from \texttt{gpt-4o-mini} and \texttt{gpt-4o}, $\Gap$ does not independently predict attack success after controlling for $\DT$, attack family, and backend ($\beta_{\Gap}=0.066$, $p=.088$). At matched thresholds $k=2$ and $k=3$, a deterministic $\DI$-based pre-execution gate blocks five attack sinks missed by the sequence-only gate, with no additional benign blocking under either graph variant, though this paired gain is not significant at the 5\% level ($p=.0625$). Execution structure can therefore reveal proximity that step count hides, supporting more targeted runtime intervention. Hence, in this study, we measure candidate influence pathways rather than causal attribution.
\end{abstract}

\section{Introduction}
\label{sec:intro}

Large language model–based agents now operate across multiple stages, interleaving reasoning, tool calls, and interaction with external environments~\citep{yao2023react,schick2023toolformer,zhou2024webarena}. They consume documents, use tools, write to memory stores, recall what they wrote from memory stores, and eventually carry out meaningful actions, including sending emails, logging reports, and making payments. Much of their input data is taken from outside the trust boundary, and indirect prompt injection attacks exploit precisely this \citep{greshake2023not,debenedetti2024agentdojo}. For a 40-stage interaction history, the adversarial input at the front seems distant from the sensitive action at the back, and the notion that distance confers any security advantage implicitly surfaces whenever a long-horizon attack is framed in terms of the number of stages before execution.

This is because such intuition actually answers another question. Temporal distance does not equate to structural distance, as an instruction read twenty steps back could be written to a note at step 6 and accessed at step 24, just before the sensitive call was made. From the viewpoint of the sequence, there would be much distance between these two points, but structurally, there would be no distance at all. The number of intervening steps gives us a measure of how long ago something took place.

To quantify this formally, we consider two metrics on the same execution path. Sequence distance $\DT$ is the length of the shortest path from injection to sink in a graph whose edges represent consecutive executions. Influence distance $\DI$ is the shortest path length in the graph with additional deterministic provenance and state edges added from the log: tool output that another call refers to by identifier; state update and its subsequent return; and a batch of tool invocations sent in one assistant call. The structural gap between the two is $\Gap=\DT-\DI$. Both are calculated from execution paths without a language model to determine relevance, without an embedding similarity threshold, and without causal inference on any path.

\begin{figure}[tbp]
  \centering
  \includegraphics[width=0.8\linewidth]{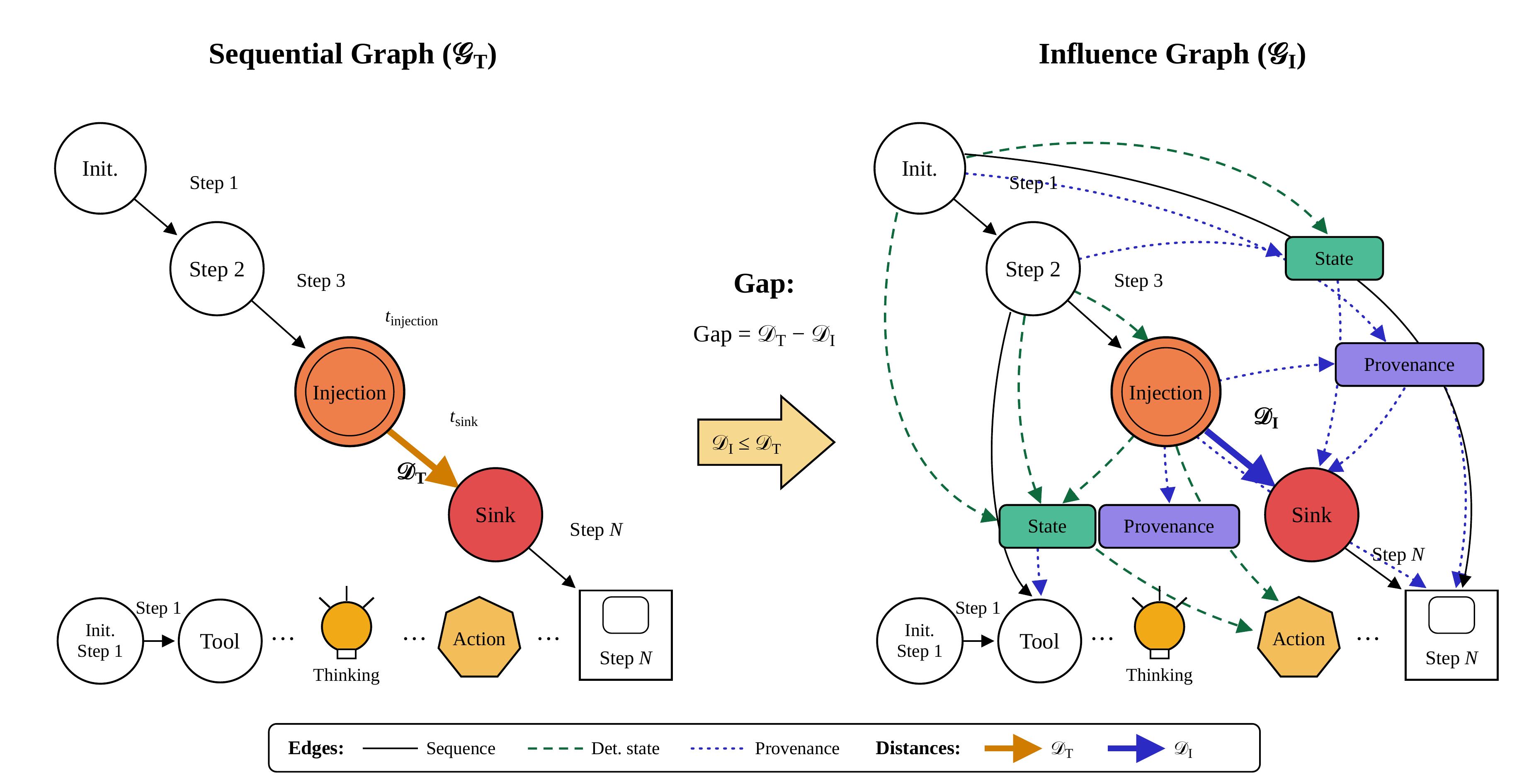}
  \caption{Method overview: constructing and comparing sequence and influence graphs.}
  \label{fig:overview}
\end{figure}

Figure~\ref{fig:overview} summarizes the complete pipeline, showing the sequential execution graph and the influence graph side by side.

\paragraph{Contributions.}
\textbf{(C1) Conceptual.} We distinguish temporal remoteness, structural exposure, and exploitability as separate properties of long-horizon agent security, and show that step count is a reliable proxy for neither of the other two.
\textbf{(C2) Method.} We introduce provenance-aware execution graphs and a deterministic influence distance $\DI$ that, paired with sequence distance $\DT$, yields a gap $\Gap = \DT - \DI$ computed from logs alone with no model in the loop.
\textbf{(C3) Empirical.} Across multiple agent environments and model families, we characterize when temporal and structural distance diverge and identify the provenance mechanisms responsible: decoupling holds for 96.9\% of long-horizon pairs, persists at 91.0\% after ablating the top contributor, yet drops to 33.8\% in banking where typed identifiers replace the memory pattern.
\textbf{(C4) Intervention.} Influence distance supports a deterministic pre-execution gate that picks up additional attack coverage over a sequence-only gate, and we characterize where that coverage starts costing benign actions.

\section{Background and Related Work}
\label{sec:related}

\paragraph{Agent injection benchmarks.}
Indirect prompt injection against tool-using agents is well documented \citep{greshake2023not,liu2024formalizing}, and several benchmarks measure it under controlled conditions. AgentDojo \citep{debenedetti2024agentdojo} provides task suites, injection tasks, and a utility--security protocol; InjecAgent \citep{zhan2024injecagent} targets tool-integrated agents; AgentHarm \citep{andriushchenko2025agentharm}, ToolEmu \citep{ruan2024toolemu}, and Agent-SafetyBench \citep{zhang2024agentsafetybench} evaluate harmful behavior more broadly. All report outcomes: whether the injection task succeeded, whether the user task was completed, and sometimes how long the interaction ran. Trajectories are recorded, but as a means to those outcomes. We take the trajectory itself as the object of study.

\paragraph{Interaction length as a proxy for separation.}
Closest to our question, long-horizon agent security work characterizes attacks by interaction length, the number of steps before a malicious instruction is executed, and by how success rates vary with it \citep{yi2025benchmarking,yao2024survey}. This describes attack dynamics but not proximity: interaction length records \emph{when} the sensitive action occurs relative to the injection, not how closely the two remain connected at the moment of execution. That gap is what we measure.

\paragraph{Information-flow and provenance defenses.}
Tracking untrusted information through a system is an old idea \citep{denning1976lattice,sabelfeld2003language,newsome2005dynamic} and an active one for agents. CaMeL \citep{debenedetti2025defeating} separates control and data flow by having a privileged component emit a program over quarantined values; FIDES \citep{costa2025securing} carries confidentiality and integrity labels through execution and refuses actions the label lattice forbids; RTBAS \citep{zhong2025rtbas} propagates taint across tool calls and surfaces only those decisions requiring user confirmation. Attention Tracker \citep{siu2025neurotaint} detects within the model rather than around it, and IFC has been argued for as a system-level framework for this threat \citep{wu2024system}. Others separate instructions from untrusted data at the model boundary \citep{chen2025struq} or isolate agent applications and execution contexts \citep{wu2025isolategpt}. Methodologically we are closest to whole-system provenance capture and inline analysis \citep{pasquier2017camflow,pasquier2018runtime} and to provenance-graph intrusion detection \citep{king2005provenance,han2020unicorn}: we adopt their stance of building structure from execution records rather than semantic inference.

\paragraph{Positioning.}
The previous defenses offer a answer to a policing question – is this behavior permissible given how it acquired its inputs. Ours is a measuring question -- how close does untrusted content really live to the action about to execute, and how much did the transcript mask that closeness. Since we only measure, our graph encodes that a path exists between the points, and says nothing regarding how content was leveraged. Table \ref{tab:positioning} lists the four dimensions that distinguish our setting from prior approaches.

\begin{table}[tb]
  \centering
    \caption{Positioning across long-horizon analysis, execution provenance, structural distance, and runtime intervention.}
  \label{tab:positioning}
  \small
  \begin{tabular}{lcccc}
    \toprule
    Work & Long horizon & Provenance & Struct.\ distance & Runtime \\
    \midrule
    AgentDojo \citep{debenedetti2024agentdojo}       & \cmark & \xmark & \xmark & \xmark \\
    Long-horizon character.\ \citep{yi2025benchmarking} & \cmark & \xmark & \xmark & \xmark \\
    CaMeL \citep{debenedetti2025defeating}           & \xmark & \cmark & \xmark & \cmark \\
    FIDES \citep{costa2025securing}                  & \xmark & \cmark & \xmark & \cmark \\
    RTBAS \citep{zhong2025rtbas}                     & \xmark & \cmark & \xmark & \cmark \\
    Attention Tracker \citep{siu2025neurotaint}      & \xmark & \xmark & \xmark & \cmark \\
    \midrule
    \textbf{This work}                               & \cmark & \cmark & \textbf{\cmark} & \textbf{\cmark} \\
    \bottomrule
  \end{tabular}
\end{table}

\section{Problem Formulation}
\label{sec:method}

\subsection{Setting and threat model}

An agent executes a user task over a long horizon. At each step it may read content, call a tool, write to or read from persistent state, or perform a \emph{sensitive action} such as sending a message, filing a record, or moving money. Execution is recorded as an ordered log of events $\tau = (v_1, \dots, v_n)$ over the event set $V$.

Two subsets of $V$ define the security problem. The \emph{untrusted sources} $S \subseteq V$ are events carrying content from outside the trust boundary: fetched documents, third-party tool results, and any injected instruction they contain. The \emph{sensitive sinks} $A \subseteq V$ are events invoking a tool designated security-relevant. Sink membership is fixed per suite in a static configuration of tool names, so nothing in the trajectory determines what counts as a sink.

The adversary controls the content of at least one $s \in S$ and aims to cause some $a \in A$ that the user did not request. It does not control the agent, the tools, or the runtime; it controls only text the agent will read. The defender observes the execution log and must decide, before $a$ executes, whether $a$ stands too close to some untrusted source to be permitted.

\subsection{Three notions of proximity}

This decision presupposes a notion of proximity between untrusted sources and sensitive actions, and the appropriate notion is not obvious. Three properties are routinely conflated under it, and separating them is contribution (C1).

The difficulty is in \emph{close}. Three properties are routinely conflated, and separating them is a contribution (C1).

\emph{Temporal remoteness} is how many steps separate $s$ from $a$ in execution order. It is cheap to compute and is the quantity long-horizon analyses report.

\emph{Structural exposure} is how few recorded derivation steps connect $s$ to $a$ through persistent state, identifiers, and execution dependencies. A note written at step 6 and recalled at step 24 is temporally remote but structurally adjacent, and step count cannot see the difference.

\emph{Exploitability} is whether an attack in fact succeeds. It turns on the model, the phrasing of the injection, and any defenses in place, all properties of the agent rather than of the trajectory's structure.

The defender's question is therefore not how long ago untrusted content arrived, but how few recorded steps stand between it and the action about to execute. The remainder of this section makes structural exposure precise: Section~\ref{sec:graph} constructs the graph it is measured on, Section~\ref{sec:distances} defines the measure and its relation to temporal remoteness, and Section~\ref{sec:gate} turns it into a pre-execution decision. Whether it also predicts exploitability is a separate empirical question, deferred to Section~\ref{sec:e2}.

\subsection{Execution graph}
\label{sec:graph}
Given a set of events $V = \{v_1, \dots, v_n\}$ we construct two graphs. The \emph{sequence graph} $G_T = (V, E_T)$ has a single edge type $E_T = \{ (v_i, v_{i+1}) : 1 \le i < n \}$, capturing only the fact that one event occurred after another. The \emph{influence graph} $G_I = (V, E_T \cup E_P)$ also has provenance edges $E_P$, created using four rules:

\begin{enumerate}
\itemsep1pt
\parskip0pt

  \item \textbf{State provenance.} A write to a state key and a subsequent read of that key are joined. Both the source and target sides of the write are logged by the runtime; hence, this is a lookup, not an inference.

  \item \textbf{Typed identifier reference.} A tool result returning an identifier (note ID, transaction ID, or contact record) and a subsequent tool call containing that identifier in its arguments are joined based on exact matching of the \texttt{source\_ids} field.

  \item \textbf{Batch provenance.} Tool calls emitted in the same assistant message are joined to the message because the runtime generated them from a single decision.

  \item \textbf{Same-tool call/result.} A tool call and its corresponding result are joined. This is the most numerous group, and we elaborate on it in Section~\ref{sec:e1}.

\end{enumerate}

\subsection{Sequence distance, influence distance, and the structural gap}
\label{sec:distances}
The two graphs induce two distances for any injection event $s$ and sensitive action $a$ in the same trajectory. \emph{Sequence distance} is $\DT(s,a) = d_{G_T}(s,a)$, the shortest-path length in the sequence graph, answering how far apart the injection and sink are if we look only at execution order.\ \emph{Influence distance} is $\DI(s,a) = d_{G_I}(s,a)$, the shortest-path length in the influence graph, answering how far apart they are once recorded provenance and state pathways are available too. In $G_T$, the only route from $s$ to $a$ runs along consecutive events, so $\DT(s,a)$ is the index difference $|i_a - i_s|$; the influence graph may admit shorter routes. Their difference is the \emph{structural gap} $\Gap(s,a) = \DT(s,a) - \DI(s,a)$, which is non-negative by the following.

\begin{proposition}[Monotonicity]
\label{prop:mono}
For every pair $(s,a)$ in every trajectory, $\DI(s,a) \le \DT(s,a)$.
\end{proposition}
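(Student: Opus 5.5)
The argument is a subgraph inclusion. Both graphs have the same vertex set $V$, and by construction the edge set of $G_I$ is $E_T \cup E_P \supseteq E_T$, so $G_T$ is a spanning subgraph of $G_I$. I will first fix conventions so that the two distances are directly comparable. Both are unweighted shortest-path lengths (hop counts), and both are interpreted with the same orientation convention, directed or undirected, applied uniformly to the two graphs. I will also adopt the usual convention that $d(s,a)=\infty$ when no path exists, with $x\le\infty$ for every $x$.

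The key step is to show that every $s$--$a$ path in $G_T$ is also an $s$--$a$ path in $G_I$, with the same length.
\begin{itemize}
\item Take any path $s=u_0,u_1,\dots,u_k=a$ in $G_T$.
\item Each consecutive pair $(u_j,u_{j+1})$ lies in $E_T$, hence in $E_T\cup E_P$.
\item So the same vertex sequence is a path in $G_I$ of length $k$.
\end{itemize}

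Taking the minimum over all such paths gives
\[
\DI(s,a)=\min_{P\in\mathcal{P}_{G_I}(s,a)}|P| \;\le\; \min_{P\in\mathcal{P}_{G_T}(s,a)}|P| = \DT(s,a),
\]
because the minimum over a superset of candidate paths can only be smaller or equal. If $\mathcal{P}_{G_T}(s,a)$ is empty, the right-hand side is $\infty$ and the inequality holds trivially.

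The only point needing care is the directed case. In $G_T$ the edges run $v_i\to v_{i+1}$, so if the source $s$ comes after the sink $a$, no directed path exists and $\DT=\infty$. The infinity convention covers this case. If instead the paper measures $\DT=|i_a-i_s|$ on the undirected sequence graph, then $G_I$ must be read as undirected as well. Inclusion still holds edge-by-edge, since symmetrizing both edge sets preserves $E_T\subseteq E_T\cup E_P$.

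This convention-matching is the main thing to state explicitly. Once it is fixed, the result is immediate, and it follows that $\Gap(s,a)\ge 0$ whenever $\DT(s,a)$ is finite.
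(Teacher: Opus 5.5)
Your proposal is correct and follows the paper's argument: $E_T \subseteq E_T \cup E_P$ makes $G_T$ a spanning subgraph of $G_I$, so every $s$--$a$ path in $G_T$ is an $s$--$a$ path in $G_I$ of the same length, and the minimum over the larger path set can only be smaller or equal. Your explicit treatment of the no-path ($\infty$) case and your requirement that one directed or undirected convention be applied to both graphs are reasonable additions that the paper leaves implicit.
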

\begin{proof}
$E_T \subseteq E_T \cup E_P = E_I$, so $G_T$ is a spanning subgraph of $G_I$. Every $s$--$a$ path in $G_T$ is an $s$--$a$ path in $G_I$ of the same length, so the minimum over $G_I$ paths is at most the minimum over $G_T$ paths.
\end{proof}

When $\Gap = 0$, provenance offers no route shorter than execution order, and the sequence view is structurally accurate for this pair. When $\Gap > 0$, a structural shortcut exists: some chain of recorded relationships connects injection to sink in fewer hops than the transcript suggests, and a large $\Gap$ means the sequence representation substantially overstates the separation. We deliberately do not read a large $\Gap$ as greater causal influence. The gap is a statement about the graph we constructed, and Section~\ref{sec:e2} tests and fails to find an association between gap magnitude and attack success.

\subsection{Guarding against circularity}
\label{sec:circularity}

A structural metric based on attack traces is vulnerable to a common criticism: if the path is traced based on a successful attack, then the metric itself includes the result that the metric intends to capture. We prevent this from happening in two ways. First, distances are computed for every eligible injection--sink pair in every injected trajectory regardless of attack outcome, so nothing about the outcome enters graph construction or path selection; the 454 pairs in Section~\ref{sec:e1} include both successful and unsuccessful attacks. The gate study separately labels decisions as attack or benign. Second, the existence of a path is not treated as evidence of causation anywhere. A short $\DI$ states that the log records a chain of relationships; it does not say the injected content shaped the model's decision.

\subsection{Runtime gate}
\label{sec:gate}

Both distances are cheap enough to evaluate immediately before a sensitive action executes. Define the influence gate at threshold $k$,
\[
\mathcal{G}_I(a; k) = \mathbf{1}\!\left[ \min_{s \in S} \DI(s,a) \le k \right],
\]
where $S$ is the set of untrusted sources seen so far in the trajectory, and $\mathcal{G}_I(a;k) = 1$ indicates that the action is being blocked. The baseline for sequences is exactly the same as for $\DI$ except using $\DT$ instead: $\mathcal{G}_T(a;k) = \mathbf{1}[\min_s \DT(s,a) \le k]$.

For the sets $\mathcal{B}_T(k)$ and $\mathcal{B}_I(k)$ of actions blocked by each gate, Proposition~\ref{prop:mono} implies $\DT(s,a) \le k \Rightarrow \DI(s,a) \le k$, and hence $\mathcal{B}_T(k) \subseteq \mathcal{B}_I(k)$ for all $k$. Set inclusion is thus a property of the construction, not an empirical result: the influence gate cannot miss an action the sequence gate blocks. What the construction leaves open is the size and contents of $\mathcal{B}_I(k) \setminus \mathcal{B}_T(k)$: how many are attack sinks, and how many are benign actions.

\section{Experimental Setup}
\label{sec:setup}

\paragraph{Primary evaluation.}
Our primary evaluation is based on the long-horizon task suite which is supported by the AgentDojo execution and logging framework \citep{debenedetti2024agentdojo}: there are 15 user tasks with 6 injection tasks and 4 model backends, resulting in 360 trajectories. Our agent uses \texttt{get\textbackslash document}, \texttt{save\_note}, \texttt{get\_note}, \texttt{search\_contacts}, \texttt{get\_calendar}, \texttt{send\_email}, \texttt{create\_report} methods, the last two of which are identified as sensitive. The first three injection tasks, 0-2, involve direct leakage as the sink appears right after the injection, while the remaining injection tasks, 3-5, use a delayed approach based on save and retrieve mechanism.
By considering only pairs where both distances are well-defined, 454 comparable pairs remain: 127 for \texttt{gpt-4o-mini}, 147 for \texttt{gpt-4o}, 91 for Claude Haiku 4.5, and 89 for Claude Sonnet 4.6. There are 90 trajectories per model. The trace-level success rates of the attacks are 50.0\%, 68.9\%, 1.1\%, and 0\%, respectively. Since the Anthropic backends have almost zero success rates, there is complete or near-complete separation in an ordinary pooled logistic regression; Section~\ref{sec:e2} thus presents the fully specified analysis with the two OpenAI models and only reports the four-model success rates descriptively.

\paragraph{Cross-domain validation.}
We additionally evaluate the standard AgentDojo banking suite as a within-framework cross-domain validation. It is not an independent benchmark: it shares AgentDojo's runtime and logging code with our primary evaluation, which limits what the comparison establishes. What varies are the task domain, the toolset, and, critically, the available provenance mechanisms. Across three available backends, it gives 377 traces and 231 comparable pairs: 99 for \texttt{gpt-4o-mini}, 54 for Haiku 4.5, and 78 for Sonnet 4.6. The evaluable injected-trace attack success rates are 74.1\% (40/54), 0\% (0/141), and 0.7\% (1/144), respectively. These labels incorporate a corrected evaluator polarity: AgentDojo banking's security predicate returns true on attack success, unlike the custom long-horizon predicate. Sensitive tools include \texttt{send\_money}, \texttt{schedule\_transaction}, \texttt{update\_scheduled\_transaction}, \texttt{update\_password}, and \texttt{update\_user\_info} sensitive. There is no save-and-retrieve memory pattern anywhere in this suite, which makes it a useful check.

\paragraph{Graph variants.}
Every structural result is reported under two constructions: the \emph{full} graph with all four edge classes from Section~\ref{sec:graph}, and the \emph{conservative} graph dropping same-tool call/result edges. We single out that class because it is the largest and the rule most open to the objection that it is permissive, a call and its result being arguably one event rather than two related ones. Removing it is our strongest robustness check. Unless stated otherwise, cross-suite comparisons use the conservative graph on both sides.

\paragraph{Statistics.}
Proportions carry Wilson 95\% confidence intervals \citep{wilson1927probable}, which behave sensibly at the extreme rates several cells exhibit. Binned attack-success plots omit cells with $n < 10$ rather than showing uninformatively wide intervals; the regression uses all 274 OpenAI pairs. Because backends differ in both attack success and trajectory length, we analyse per backend, where the quantity is backend-dependent, and include backend as a covariate when we pool. Paired gate comparisons use an exact McNemar test \citep{mcnemar1947note}.

\section{E1: Sequence and Influence Distance Decouple}
\label{sec:e1}

\paragraph{RQ1.} Do sequence distance and influence distance diverge in agent
trajectories, and if so, by how much?

\subsection{Main result}

For the 454 comparable injection-sink pairs in the full graph, the two distances differ in 440 cases, yielding a ratio of 96.9\% (95\% confidence interval [94.9\%, 98.2\%]). The median of the difference is 9 hops, average 9.90 (standard deviation 6.21), with the maximum of 37 hops, while the median $\DT$ is 23 and the median $\DI$ is 12 (Pearson $r = 0.779$, Spearman $\rho = 0.753$).

Figure~\ref{fig:decoupling}(a) shows the pairs. In effect, all the mass lies below the diagonal, with substantial vertical variation at constant $\DT$: for example, two trajectories with an injection and a sink separated by 30 steps could have influence distances of 8 and 25.

\begin{figure}[tb]
  \centering
  \includegraphics[width=0.8\linewidth]{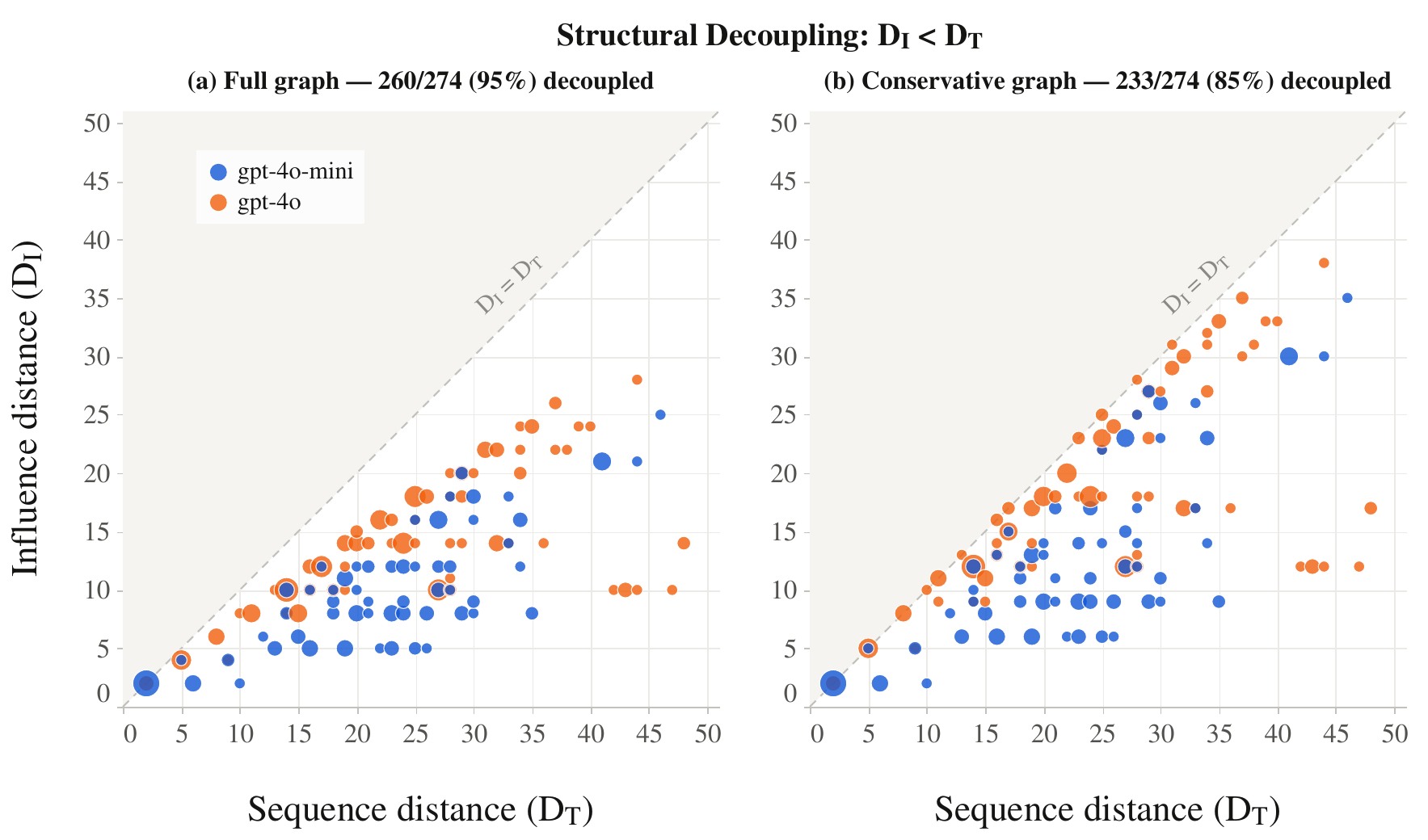}
  \caption{Sequence against influence distance for the 274 OpenAI pairs. The dashed line is $\DI = \DT$; Proposition~\ref{prop:mono} forbids points above it. Across all four models, 440/454 (96.9\%) decouple in the full graph and 413/454 (91.0\%) in the conservative graph.}
  \label{fig:decoupling}
\end{figure}

\subsection{Robustness to graph construction}

To test whether the result depends on same-tool call/result edges, we remove this edge class from the influence graph. The ablated graph leaves 413 of 454 pairs decoupled (91.0\%, 95\% CI [88.0\%, 93.3\%]), with a median gap of 3, a mean of 5.96 (SD 6.17), and a maximum of 35. In Figure~\ref{fig:decoupling}(b), the distribution moves toward the diagonal, as expected, but most pairs remain below it. Same-tool edges therefore increase the magnitude of decoupling without being necessary for it: removing them reduces the decoupling rate by 5.9 percentage points and the median gap by 6 hops. The remaining decoupling arises from state writes and reads, typed identifier references, and batch provenance.

\paragraph{Provenance verification.}
Since the measurement rests on edges being real rather than artifacts of our parsing, we sampled 15 shortcut-containing trajectories at random and checked each shortcut edge on the $s$--$a$ path against the canonical \texttt{source\_ids} recorded by the runtime; every edge corresponded to a relationship the log had recorded independently of our construction. The audit also clarified a scope condition: decoupling requires a sufficiently long sequence for a shortcut to beat the transcript route, so in short trajectories, $\Gap$ is necessarily small. 

\subsection{Cross-domain validation}

\begin{table}[tb]
  \centering
  \caption{Cross-domain comparison, conservative graph on both sides. Banking decouples at a lower rate through different provenance mechanisms, and its trajectories are much shorter.}
  \label{tab:crossdomain}
  \small
  \begin{tabular}{lrrrrrr}
    \toprule
    Suite & Pairs & Decoupled & Rate & Median $\Gap$ & Max $\Gap$ & Median $\DT$ \\
    \midrule
    Long-horizon & 454 & 413 & 91.0\% & 3 & 35 & 23 \\
    Banking      & 231 &  78 & 33.8\% & 0 & 19 &  5 \\
    \bottomrule
  \end{tabular}
\end{table}

Table~\ref{tab:crossdomain} compares the suites. Banking decouples in 78 of 231 pairs (33.8\%) with a median gap of 0 and a maximum of 19, far below the long-horizon rate; the reason is in the last column, where banking's median $\DT$ of 5 against 23 leaves much less room for a shortcut to exist.

What makes the comparison informative is the mechanism, not the rate. The long-horizon shortcuts primarily use note save-and-retrieve operations, along with contact and calendar state. Banking has no comparable memory pattern: its shortcuts use batch provenance, typed scheduled-transaction identifiers, and account-state mutations. This supports a narrow conclusion: decoupling is not confined to the long-horizon task design or to explicit save-and-retrieve pathways. Because both suites share AgentDojo's logging infrastructure, the evaluation does not establish generalization across independently developed agent benchmarks.

\section{E2: Does the Structural Gap Predict Attack Success?}
\label{sec:e2}

\paragraph{RQ2.} At a fixed sequence distance, is a larger structural gap
associated with a higher probability of attack success?

If a large gap indicates that untrusted content sits structurally close to a sink despite appearing distant, one might expect attacks with large gaps to succeed more often. We tested this and did not find it.

\begin{figure}[tb]
  \centering
  \includegraphics[width=0.65\linewidth]{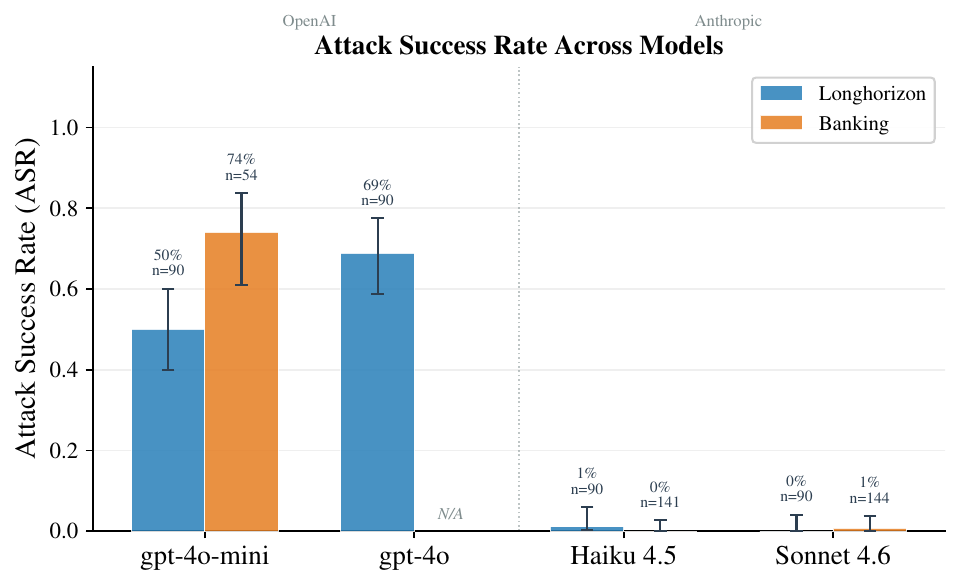}
  \caption{Corrected trace-level attack success rates with Wilson 95\% intervals. OpenAI models are susceptible across evaluated domains, while both Anthropic models are at 0--1.1\%. GPT-4o was not run on banking.}
  \label{fig:asrgap}
\end{figure}

Figure~\ref{fig:asrgap} gives the corrected four-model context. Long-horizon ASR is 50.0\%, 68.9\%, 1.1\%, and 0\% for GPT-4o-mini, GPT-4o, Haiku 4.5, and Sonnet 4.6. Banking ASR is 74.1\%, 0\%, and 0.7\% for the three evaluated models. Thus, model identity dominates the descriptive outcome differences and must not be confused with a structural-gap effect.

To separate the gap's contribution from the covariates it travels with, we fit a logistic model over the 274 OpenAI-model pairs:
\[
\operatorname{logit} \Pr(Y = 1) = \beta_0 + \beta_1 \DT + \beta_2 \Gap + \beta_3 \,\mathrm{Delayed} + \beta_4 \,\mathrm{Backend},
\]
where $Y$ indicates attack success, $\mathrm{Delayed}$ is a binary attack-family indicator separating the delayed injection tasks (3--5) from the immediate ones (0--2), and $\mathrm{Backend}$ indicates backend B. Results are in Table~\ref{tab:regression}.

\begin{table}[tb]
  \centering
  \caption{Logistic regression of attack success on structural gap with controls ($n = 274$, pseudo-$R^2 = 0.14$, AIC $= 283.0$). The gap coefficient is not significant at the 5\% level.}
  \label{tab:regression}
  \small
  \begin{tabular}{lrrrr}
    \toprule
    Predictor & Coef. & SE & $z$ & $p$ \\
    \midrule
    Intercept                & $1.645$  & $0.415$ & $3.97$  & $<.001$ \\
    Sequence distance $\DT$  & $-0.110$ & $0.029$ & $-3.84$ & $<.001$ \\
    Structural gap $\Gap$    & $0.066$  & $0.039$ & $1.71$  & $.088$ \\
    Delayed attack family    & $1.168$  & $0.311$ & $3.75$  & $<.001$ \\
    Backend B                & $1.334$  & $0.341$ & $3.92$  & $<.001$ \\
    \bottomrule
  \end{tabular}
\end{table}

All three control covariates are significant: longer sequence distance reduces attack success ($\beta = -0.110$, $p < 0.001$), delayed attacks succeed more often than immediate ones ($\beta = 1.168$, $p < 0.001$), and backend B is more susceptible than backend A ($\beta = 1.334$, $p < 0.001$). The structural gap is not ($\beta_\Gap = 0.066$, $p = 0.088$). We therefore do not find statistically significant evidence that the structural gap independently predicts attack success after these controls; with $n = 274$ and pseudo-$R^2$ of 0.14, the data do not resolve whether this coefficient differs from zero.

A structural shortcut can be useful for identifying candidate influence pathways at an action boundary, but it does not function as an independent vulnerability score. Characterization and intervention are different jobs, and evidence for one is not evidence for the other. Section~\ref{sec:e3} therefore does not ask whether $\Gap$ ranks actions by risk; it asks the narrower operational question of what a $\DI$ threshold catches that a $\DT$ threshold does not.

\section{E3: A Runtime Influence-Distance Gate}
\label{sec:e3}

\paragraph{RQ3.} What additional security coverage does influence distance
provide over a sequence-only gate, and at what cost in benign actions blocked?

\paragraph{Gate and evaluation protocol.}
The gate is the one from Section~\ref{sec:gate}: immediately before a sensitive action, compute the minimum influence distance from any untrusted source seen so far and block if it is at most $k$. Because the gate uses only recorded structure, it involves no model call, learned component, or randomness. Evaluation is a counterfactual replay with one decision per injection--sink pair, covering the same 274 OpenAI-model pairs as Section~\ref{sec:e2} (201 attack, 73 benign), comparing $\mathcal{G}_I$ against $\mathcal{G}_T$ at matched $k$. Since $\mathcal{B}_T(k) \subseteq \mathcal{B}_I(k)$ holds analytically, the influence gate cannot lose on any single decision and reporting a win rate would be reporting Proposition~\ref{prop:mono}; the informative quantities are how many extra attack sinks it catches and how many extra benign actions it blocks.

\subsection{Coverage against cost}

\begin{figure}[tb]
  \centering
  \includegraphics[width=0.65\linewidth]{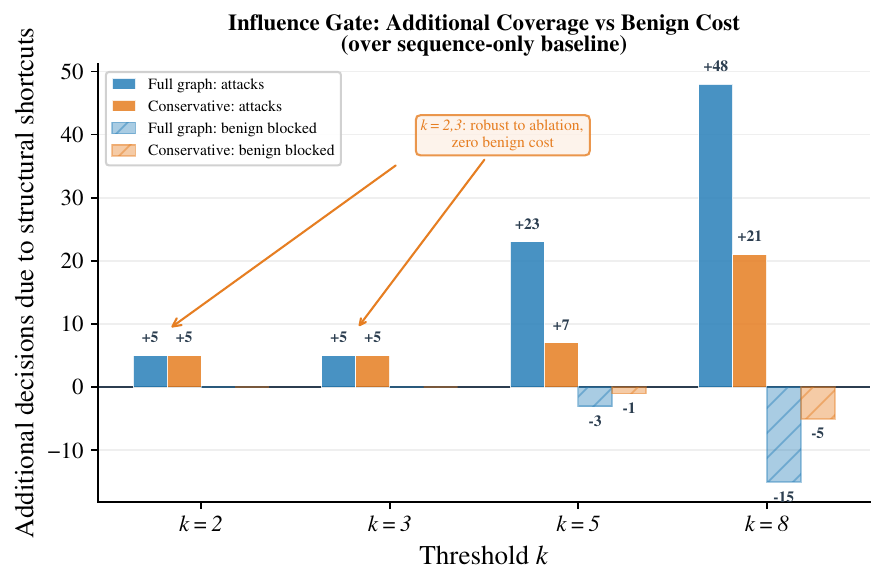}
\caption{Additional attack and benign decisions blocked by the influence gate at matched thresholds under \textbf{(a)} full and \textbf{(b)} conservative graphs. Solid and hatched bars denote attack and benign decisions, respectively.}
  \label{fig:coverage}
\end{figure}

Figure~\ref{fig:coverage} and Table~\ref{tab:gate} give the trade-off. At $k = 2$, the influence gate flags five attack sinks that the sequence gate misses and blocks no additional benign actions, and the same holds at $k = 3$. The conservative graph gives identical numbers at both thresholds, so this operating point does not depend on the edge rule we considered least reliable. 

Beyond that, the trade-off degrades: at $k = 5$, the full graph buys 23 extra attack sinks for 3 extra benign blocks compared with the conservative graph's 7 for 1; at $k = 8$, the corresponding trade-offs are 48 for 15 and 21 for 5.

\begin{table}[tb]
  \centering
  \caption{Additional decisions blocked by the influence gate relative to the sequence-only gate at matched $k$. The $k=3$ row equals $k=2$.}
  \label{tab:gate}
  \small
  \begin{tabular}{lrrrr}
    \toprule
    & \multicolumn{2}{c}{Full graph} & \multicolumn{2}{c}{Conservative} \\
    \cmidrule(lr){2-3}\cmidrule(lr){4-5}
    $k$ & Attack & Benign & Attack & Benign \\
    \midrule
    2 & +5 & +0 & +5 & +0 \\
    5 & +23 & +3 & +7 & +1 \\
    8 & +48 & +15 & +21 & +5 \\
    \bottomrule
  \end{tabular}
\end{table}

At $k=2$, exact McNemar testing with five discordant attack decisions and none in the reverse direction gives a two-sided $p=.0625$; the gain is operationally interesting but not statistically significant at 5\%. The gate is not a standalone prompt-injection defense. It is a model-free pre-execution check that surfaces candidate pathways that transcript position alone does not reveal.

\section{Discussion and Limitations}
\label{sec:discussion}
The central distinction of the study lies in the difference between the two notions of distance, not merely in their numerical values. Sequence distance asks how many execution steps ago the injection occurred, whereas influence distance asks how far away the injection is in structural terms. Both distances are legitimate measures, but they differ by a median of 9 hops in our dataset. Moreover, the observed difference should not be interpreted as a universal vulnerability score. It does not independently predict attack success, and the gate result is limited to intervention at the time of action. The distances are determined by connections between nodes in the graph, while the existence of a recorded path does not necessarily indicate that the agent accessed or used the injected data.

\section{Conclusion}

We introduced influence distance to measure how structurally close untrusted content remains to sensitive actions in LLM agent trajectories. Across two task domains, execution structure often reveals substantially shorter paths than sequence position alone, and influence-based gating catches attacks that sequence-only gating misses. These findings show that temporal distance should not be treated as structural separation.

\FloatBarrier

\bibliographystyle{plainnat}
{\footnotesize
\setlength{\bibsep}{1pt plus 0.3ex}
\bibliography{references}

@inproceedings{debenedetti2024agentdojo,
  title     = {{AgentDojo}: A Dynamic Environment to Evaluate Prompt Injection Attacks and Defenses for {LLM} Agents},
  author    = {Debenedetti, Edoardo and Zhang, Jie and Balunovi{\'c}, Mislav and Beurer-Kellner, Luca and Fischer, Marc and Tram{\`e}r, Florian},
  booktitle = {Advances in Neural Information Processing Systems Datasets and Benchmarks Track},
  year      = {2024}
}

@inproceedings{debenedetti2025defeating,
  title     = {Defeating Prompt Injections by Design},
  author    = {Debenedetti, Edoardo and Shumailov, Ilia and Fan, Tianqi and Hayes, Jamie and Carlini, Nicholas and Fabian, Daniel and Kern, Christoph and Shi, Chongyang and Terzis, Andreas and Tram{\`e}r, Florian},
  booktitle = {IEEE Conference on Secure and Trustworthy Machine Learning (SaTML)},
  year      = {2026},
  eprint    = {2503.18813},
  archivePrefix = {arXiv},
  primaryClass  = {cs.CR}
}

@article{costa2025securing,
  title   = {Securing {AI} Agents with Information-Flow Control},
  author  = {Costa, Manuel and K{\"o}pf, Boris and Kolluri, Aashish and Paverd, Andrew and Russinovich, Mark and Salem, Ahmed and Tople, Shruti and Wutschitz, Lukas and Zanella-B{\'e}guelin, Santiago},
  journal = {arXiv preprint arXiv:2505.23643},
  year    = {2025}
}

@article{zhong2025rtbas,
  title   = {{RTBAS}: Defending {LLM} Agents Against Prompt Injection and Privacy Leakage},
  author  = {Zhong, Peter Yong and Chen, Siyuan and Wang, Ruiqi and McCall, McKenna and Titzer, Ben L. and Miller, Heather and Gibbons, Phillip B.},
  journal = {arXiv preprint arXiv:2502.08966},
  year    = {2025}
}

@inproceedings{siu2025neurotaint,
  title     = {Attention Tracker: Detecting Prompt Injection Attacks in {LLM}s},
  author    = {Hung, Kuo-Han and Ko, Ching-Yun and Rawat, Ambrish and Chung, I-Hsin and Hsu, Winston H. and Chen, Pin-Yu},
  booktitle = {Findings of the Association for Computational Linguistics: NAACL 2025},
  pages     = {2309--2322},
  publisher = {Association for Computational Linguistics},
  doi       = {10.18653/v1/2025.findings-naacl.123},
  year      = {2025}
}

@inproceedings{yao2023react,
  title     = {{ReAct}: Synergizing Reasoning and Acting in Language Models},
  author    = {Yao, Shunyu and Zhao, Jeffrey and Yu, Dian and Du, Nan and Shafran, Izhak and Narasimhan, Karthik and Cao, Yuan},
  booktitle = {International Conference on Learning Representations},
  year      = {2023}
}

@inproceedings{schick2023toolformer,
  title     = {{Toolformer}: Language Models Can Teach Themselves to Use Tools},
  author    = {Schick, Timo and Dwivedi-Yu, Jane and Dess{\`i}, Roberto and Raileanu, Roberta and Lomeli, Maria and Hambro, Eric and Zettlemoyer, Luke and Cancedda, Nicola and Scialom, Thomas},
  booktitle = {Advances in Neural Information Processing Systems},
  volume    = {36},
  pages     = {68539--68551},
  year      = {2023}
}

@inproceedings{zhou2024webarena,
  title     = {{WebArena}: A Realistic Web Environment for Building Autonomous Agents},
  author    = {Zhou, Shuyan and Xu, Frank F. and Zhu, Hao and Zhou, Xuhui and Lo, Robert and Sridhar, Abishek and Cheng, Xianyi and Ou, Tianyue and Bisk, Yonatan and Fried, Daniel and Alon, Uri and Neubig, Graham},
  booktitle = {International Conference on Learning Representations},
  year      = {2024}
}

@inproceedings{chen2025struq,
  title     = {{StruQ}: Defending Against Prompt Injection with Structured Queries},
  author    = {Chen, Sizhe and Piet, Julien and Sitawarin, Chawin and Wagner, David},
  booktitle = {34th USENIX Security Symposium (USENIX Security 25)},
  pages     = {2383--2400},
  publisher = {USENIX Association},
  year      = {2025}
}

@inproceedings{wu2025isolategpt,
  title     = {{IsolateGPT}: An Execution Isolation Architecture for {LLM}-Based Agentic Systems},
  author    = {Wu, Yuhao and Roesner, Franziska and Kohno, Tadayoshi and Zhang, Ning and Iqbal, Umar},
  booktitle = {Network and Distributed System Security Symposium},
  year      = {2025}
}

@inproceedings{pasquier2017camflow,
  title     = {Practical Whole-System Provenance Capture},
  author    = {Pasquier, Thomas and Han, Xueyuan and Goldstein, Mark and Moyer, Thomas and Eyers, David and Seltzer, Margo and Bacon, Jean},
  booktitle = {Proceedings of the 2017 Symposium on Cloud Computing},
  pages     = {405--418},
  publisher = {Association for Computing Machinery},
  doi       = {10.1145/3127479.3129249},
  year      = {2017}
}

@inproceedings{pasquier2018runtime,
  title     = {Runtime Analysis of Whole-System Provenance},
  author    = {Pasquier, Thomas and Han, Xueyuan and Moyer, Thomas and Bates, Adam and Hermant, Olivier and Eyers, David and Bacon, Jean and Seltzer, Margo},
  booktitle = {Proceedings of the 2018 ACM SIGSAC Conference on Computer and Communications Security},
  pages     = {1601--1616},
  publisher = {Association for Computing Machinery},
  doi       = {10.1145/3243734.3243776},
  year      = {2018}
}

@inproceedings{greshake2023not,
  title     = {Not What You've Signed Up For: Compromising Real-World {LLM}-Integrated Applications with Indirect Prompt Injection},
  author    = {Greshake, Kai and Abdelnabi, Sahar and Mishra, Shailesh and Endres, Christoph and Holz, Thorsten and Fritz, Mario},
  booktitle = {Proceedings of the 2023 ACM SIGSAC Conference on Computer and Communications Security},
  doi       = {10.1145/3605764.3623985},
  year      = {2023}
}

@inproceedings{zhan2024injecagent,
  title     = {{InjecAgent}: Benchmarking Indirect Prompt Injections in Tool-Integrated Large Language Model Agents},
  author    = {Zhan, Qiusi and Liang, Zhixiang and Ying, Zifan and Kang, Daniel},
  booktitle = {Findings of the Association for Computational Linguistics: ACL 2024},
  pages     = {10471--10506},
  publisher = {Association for Computational Linguistics},
  doi       = {10.18653/v1/2024.findings-acl.624},
  year      = {2024}
}

@inproceedings{andriushchenko2025agentharm,
  title     = {{AgentHarm}: A Benchmark for Measuring Harmfulness of {LLM} Agents},
  author    = {Andriushchenko, Maksym and Souly, Alexandra and Dziemian, Mateusz and Duenas, Derek and Lin, Maxwell and Wang, Justin and Hendrycks, Dan and Zou, Andy and Kolter, Zico and Fredrikson, Matt and Winsor, Eric and Wynne, Jerome and Gal, Yarin and Davies, Xander},
  booktitle = {International Conference on Learning Representations},
  year      = {2025}
}

@inproceedings{liu2024formalizing,
  title     = {Formalizing and Benchmarking Prompt Injection Attacks and Defenses},
  author    = {Liu, Yupei and Jia, Yuqi and Geng, Runpeng and Jia, Jinyuan and Gong, Neil Zhenqiang},
  booktitle = {33rd USENIX Security Symposium},
  pages     = {1831--1847},
  publisher = {USENIX Association},
  year      = {2024}
}

@article{wu2024system,
  title   = {System-Level Defense against Indirect Prompt Injection Attacks: An Information Flow Control Perspective},
  author  = {Wu, Fangzhou and Cecchetti, Ethan and Xiao, Chaowei},
  journal = {arXiv preprint arXiv:2409.19091},
  year    = {2024}
}

@inproceedings{yi2025benchmarking,
  title     = {Benchmarking and Defending against Indirect Prompt Injection Attacks on Large Language Models},
  author    = {Yi, Jingwei and Xie, Yueqi and Zhu, Bin and Kiciman, Emre and Sun, Guangzhong and Xie, Xing and Wu, Fangzhao},
  booktitle = {Proceedings of the 31st ACM SIGKDD Conference on Knowledge Discovery and Data Mining},
  doi       = {10.1145/3690624.3709179},
  year      = {2025}
}

@article{denning1976lattice,
  title   = {A Lattice Model of Secure Information Flow},
  author  = {Denning, Dorothy E.},
  journal = {Communications of the ACM},
  volume  = {19},
  number  = {5},
  pages   = {236--243},
  doi     = {10.1145/360051.360056},
  year    = {1976}
}

@article{sabelfeld2003language,
  title   = {Language-Based Information-Flow Security},
  author  = {Sabelfeld, Andrei and Myers, Andrew C.},
  journal = {IEEE Journal on Selected Areas in Communications},
  volume  = {21},
  number  = {1},
  pages   = {5--19},
  doi     = {10.1109/JSAC.2002.806121},
  year    = {2003}
}

@inproceedings{newsome2005dynamic,
  title     = {Dynamic Taint Analysis for Automatic Detection, Analysis, and Signature Generation of Exploits on Commodity Software},
  author    = {Newsome, James and Song, Dawn},
  booktitle = {Proceedings of the 12th Annual Network and Distributed System Security Symposium},
  year      = {2005}
}

@inproceedings{king2005provenance,
  title     = {Backtracking Intrusions},
  author    = {King, Samuel T. and Chen, Peter M.},
  booktitle = {Proceedings of the 19th ACM Symposium on Operating Systems Principles},
  pages     = {223--236},
  doi       = {10.1145/945445.945467},
  year      = {2003}
}

@inproceedings{han2020unicorn,
  title     = {{UNICORN}: Runtime Provenance-Based Detector for Advanced Persistent Threats},
  author    = {Han, Xueyuan and Pasquier, Thomas and Bates, Adam and Mickens, James and Seltzer, Margo},
  booktitle = {Network and Distributed System Security Symposium},
  doi       = {10.14722/ndss.2020.24046},
  year      = {2020}
}

@article{mcnemar1947note,
  title   = {Note on the Sampling Error of the Difference between Correlated Proportions or Percentages},
  author  = {McNemar, Quinn},
  journal = {Psychometrika},
  volume  = {12},
  number  = {2},
  pages   = {153--157},
  doi     = {10.1007/BF02295996},
  year    = {1947}
}

@article{wilson1927probable,
  title   = {Probable Inference, the Law of Succession, and Statistical Inference},
  author  = {Wilson, Edwin B.},
  journal = {Journal of the American Statistical Association},
  volume  = {22},
  number  = {158},
  pages   = {209--212},
  doi     = {10.1080/01621459.1927.10502953},
  year    = {1927}
}

@article{yao2024survey,
  title   = {A Survey on Large Language Model ({LLM}) Security and Privacy: The Good, the Bad, and the Ugly},
  author  = {Yao, Yifan and Duan, Jinhao and Xu, Kaidi and Cai, Yuanfang and Sun, Zhibo and Zhang, Yue},
  journal = {High-Confidence Computing},
  volume  = {4},
  number  = {2},
  pages   = {100211},
  doi     = {10.1016/j.hcc.2024.100211},
  year    = {2024}
}

@article{zhang2024agentsafetybench,
  title   = {{Agent-SafetyBench}: Evaluating the Safety of {LLM} Agents},
  author  = {Zhang, Zhexin and Cui, Shiyao and Lu, Yida and Zhou, Jingzhuo and Yang, Junxiao and Wang, Hongning and Huang, Minlie},
  journal = {arXiv preprint arXiv:2412.14470},
  year    = {2024}
}

@inproceedings{ruan2024toolemu,
  title     = {Identifying the Risks of {LM} Agents with an {LM}-Emulated Sandbox},
  author    = {Ruan, Yangjun and Dong, Honghua and Wang, Andrew and Pitis, Silviu and Zhou, Yongchao and Ba, Jimmy and Dubois, Yann and Maddison, Chris J. and Hashimoto, Tatsunori},
  booktitle = {International Conference on Learning Representations},
  year      = {2024}
}
}

\clearpage
\appendix
\section*{Appendix}

\section{Worked Trajectory Example}
This appendix walks through one trajectory in full. In the 26-event run of Figure~\ref{fig:example}, the injection appears 16 steps before the sensitive \texttt{send\_email} call, so the transcript suggests substantial separation; the influence graph records a five-hop pathway between them ($\Delta T = 16$, $\Delta I = 5$, $\text{Gap} = 11$).

\begin{figure}[H]
  \centering
  \includegraphics[width=0.82\linewidth]{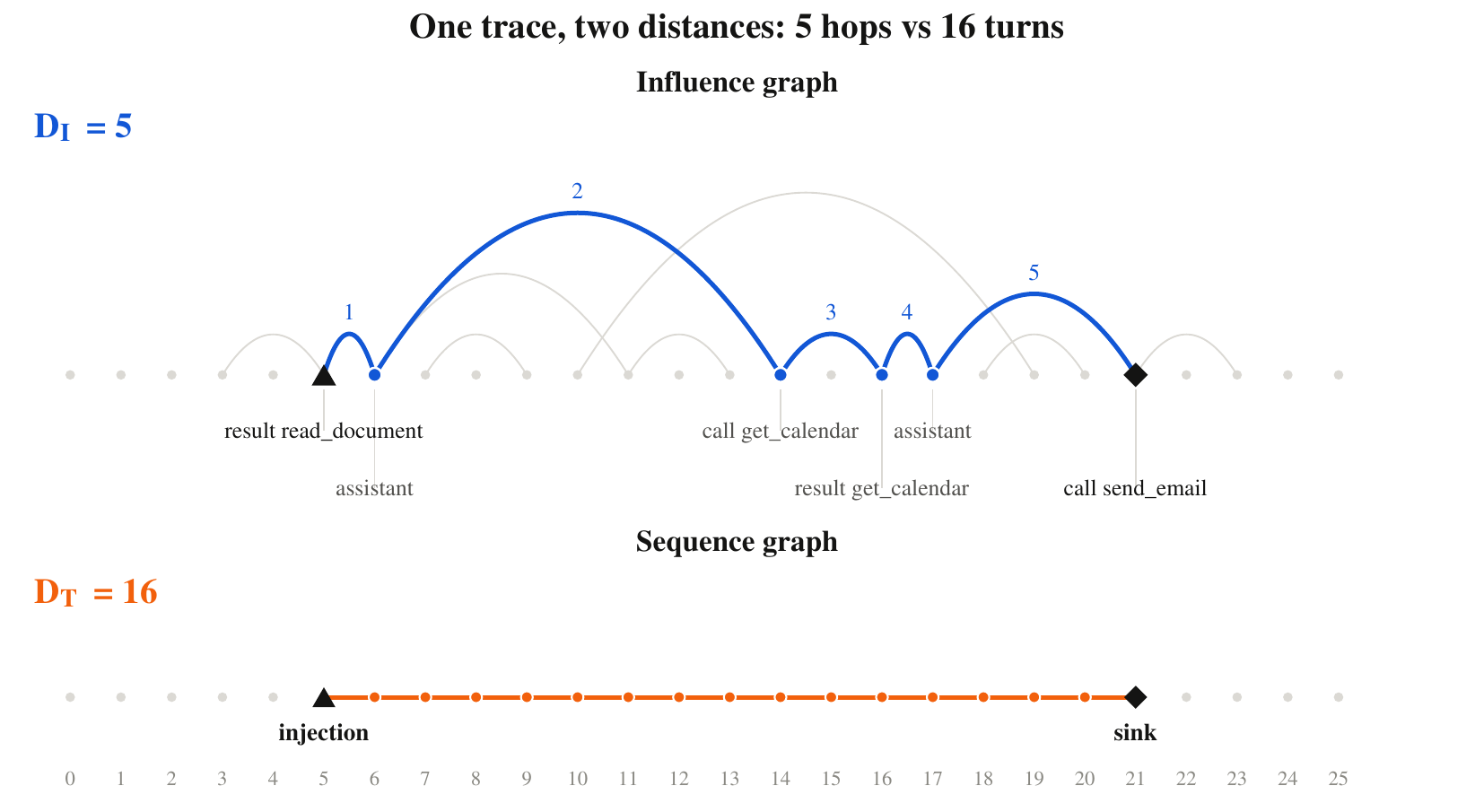}
  \caption{A 26-event long-horizon trajectory. The sensitive
  \texttt{send\_email} call has $\DT=16$, $\DI=5$, and $\Gap=11$.}
  \label{fig:example}
\end{figure}

\FloatBarrier

\section{Additional Model and Domain Results}

The aggregate result masks some variation across models. Table~\ref{tab:appendix-models} reports decoupling results by model. Long-horizon decoupling remains high across all models on the full graph, while the shorter banking trajectories exhibit more model-dependent graph structure and consequently lower conservative rates.

\begin{table}
  \centering
  \caption{Decoupling by model. LH uses the full graph; banking uses the conservative graph. \emph{(A dash indicates that the model was not evaluated.)}}
  \label{tab:appendix-models}
  \small
  \begin{tabular}{lrrrr}
    \toprule
    Model & LH pairs & LH rate & Bank pairs & Bank rate \\
    \midrule
    GPT-4o-mini & 127 & 91.3\% & 99 & 45.5\% \\
    GPT-4o      & 147 & 98.0\% & -- & -- \\
    Haiku 4.5   & 91  & 100.0\% & 54 & 18.5\% \\
    Sonnet 4.6  & 89  & 100.0\% & 78 & 29.5\% \\
    \bottomrule
  \end{tabular}
\end{table}

\begin{figure}[tbp]
  \centering
  \includegraphics[width=0.76\linewidth]{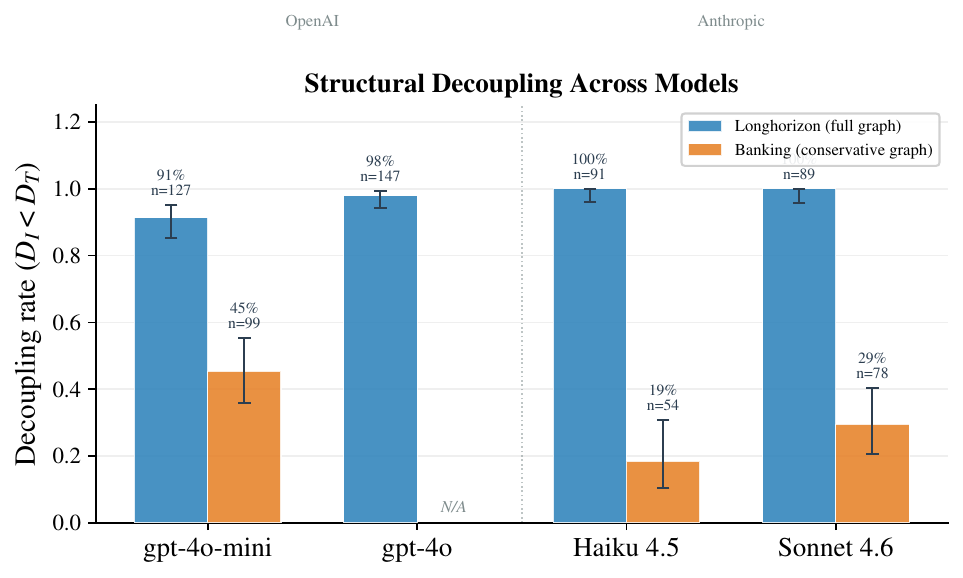}
  \caption{Structural decoupling across the four evaluated models. Error bars are Wilson 95\% confidence intervals.}
  \label{fig:appendix-multimodel}
\end{figure}

\section{Evaluator Correction and Reproducibility Notes}

The banking evaluation required an enhancement of the label polarity. The custom long-horizon security predicate returns \texttt{true} when the trajectory is secure, so attack success is its negation. AgentDojo's banking predicate uses the opposite convention, returning \texttt{true} when an attack succeeds. Thus, applying the long-horizon conversion inverted the banking labels. We modified this before generating the reported results and verified it manually. For manual verification, a refusal without a \texttt{send\_money} call was marked successful, while a transfer to the attacker was marked unsuccessful.

Table~\ref{tab:appendix-asr} gives the final trace-level rates. This substantially changes the interpretation: the Anthropic models are robust in both domains, whereas the evaluated OpenAI models exhibit much higher attack success.

\begin{table}[tbp]
  \centering
  \caption{Corrected attack success rates from evaluable injected traces.}
  \label{tab:appendix-asr}
  \small
  \begin{tabular}{lrrr}
    \toprule
    Suite & Model & Successful / traces & ASR \\
    \midrule
    Long-horizon & GPT-4o-mini & 45/90 & 50.0\% \\
    Long-horizon & GPT-4o      & 62/90 & 68.9\% \\
    Long-horizon & Haiku 4.5   & 1/90  & 1.1\% \\
    Long-horizon & Sonnet 4.6  & 0/90  & 0.0\% \\
    Banking & GPT-4o-mini & 40/54  & 74.1\% \\
    Banking & Haiku 4.5   & 0/141  & 0.0\% \\
    Banking & Sonnet 4.6  & 1/144  & 0.7\% \\
    \bottomrule
  \end{tabular}
\end{table}

\paragraph{Scope notes.}
Both domains use AgentDojo's execution and logging infrastructure, so the cross-domain result does not establish generalization to independently built agent runtimes. The evaluated threat model is indirect prompt injection through tool outputs. Richer provenance, longer deployments, independent frameworks, and prospective threshold selection remain important future implementations.

\section{Full Gate Threshold Sweep}

Table~\ref{tab:appendix-gate} reports all evaluated thresholds. At $k=2$ and $k=3$, the full and conservative graphs agree and add no benign blocks. Larger thresholds increase attack coverage but also impose a growing benign cost.

\begin{table}[tbp]
  \centering
  \caption{Additional decisions blocked relative to the sequence-only gate.}
  \label{tab:appendix-gate}
  \small
  \begin{tabular}{lrrrr}
    \toprule
    & \multicolumn{2}{c}{Full graph} & \multicolumn{2}{c}{Conservative} \\
    \cmidrule(lr){2-3}\cmidrule(lr){4-5}
    $k$ & Attack & Benign & Attack & Benign \\
    \midrule
    2  & +5  & +0  & +5  & +0 \\
    3  & +5  & +0  & +5  & +0 \\
    5  & +23 & +3  & +7  & +1 \\
    8  & +48 & +15 & +21 & +5 \\
    10 & +71 & +34 & +36 & +14 \\
    \bottomrule
  \end{tabular}
\end{table}

\section{Graph Construction Procedure}
\label{app:construction}

This section spells out the measurement pipeline at the level needed to reimplement it. Each trajectory is first normalized into an ordered list of events. Every event receives a stable identifier and retains its event type, tool name, arguments, returned identifiers, state key, source identifiers, and position in the trace. No free-text similarity or model judgment is used.

For each trajectory, graph construction proceeds as follows:

\begin{enumerate}\itemsep2pt\parskip0pt
  \item Create one node for every normalized event and connect consecutive nodes with sequence edges.
  \item Join each tool call to its recorded result. This edge class is included only in the full graph.
  \item Index writes by state key and join each later read to the preceding write or writes recorded for that key.
  \item Index emitted typed identifiers and join a later event when its recorded \texttt{source\_ids} field contains an exact identifier match.
  \item Join tool calls emitted in the same assistant batch to their shared decision event.
  \item For every injection--sink pair, compute the shortest-path length in the sequence graph and in both influence-graph variants. Store $\DT$, $\DI$, and $\Gap=\DT-\DI$ together with trace-level outcome labels.
\end{enumerate}

The conservative graph contains the identical pipeline except that it does not perform step 2. All 4 kinds of graphs, having identical sequence edges, show that the sequence graph is spanned by the influence graphs, thus $\DI \le \DT$ is a construction invariant, and useful for us to implement. Whenever $\DI > \DT$, the graph is invalid or index is wrong.

\section{Measurement Record Summary}

Table~\ref{tab:schema} lists the fields of the pair-level analysis table, which holds one row per comparable injection-sink pair; all reported aggregates are reproducible from it.

\begin{table}[tbp]
  \centering
  \caption{Core fields in the pair-level analysis table.}
  \label{tab:schema}
  \small
  \begin{tabular}{>{\raggedright\arraybackslash}p{0.27\linewidth}p{0.65\linewidth}}
    \toprule
    Field & Meaning \\
    \midrule
    \texttt{run\_id} & Stable trajectory identifier. \\
    \texttt{suite}, \texttt{backend}, \texttt{model\_id} & Domain and model provenance. \\
    \texttt{task\_id}, \texttt{attack\_id} & User task and injection-task identifiers. \\
    \texttt{injection\_id}, \texttt{sink\_id} & Event identifiers defining the measured pair. \\
    \texttt{d\_t} & Shortest-path length in the sequence graph. \\
    \texttt{d\_i\_full}, \texttt{d\_i\_con} & Influence distances under full and conservative graphs. \\
    \texttt{gap\_full}, \texttt{gap\_con} & Corresponding values of $\DT-\DI$. \\
    \texttt{attack\_success}, \texttt{task\_success} & Corrected security and utility outcomes. \\
    \texttt{n\_events} & Length of the normalized trajectory. \\
    \texttt{influence\_edge\_types} & Edge classes appearing on the selected influence path. \\
    \bottomrule
  \end{tabular}
\end{table}

\section{Complete Decoupling Statistics}

Table~\ref{tab:complete-decoupling} expands the abbreviated main-text result. It reports both graph variants for all long-horizon backends, as well as the conservative banking graph used for cross-domain comparison.

\begin{table}[tbp]
  \centering
  \caption{Complete per-model decoupling summary. Medians refer to the pair-level distributions.}
  \label{tab:complete-decoupling}
  \scriptsize
  \begin{tabular}{llrrrrrr}
    \toprule
    Suite/model & Graph & Pairs & Dec. & Rate & Med. $\DT$ & Med. $\DI$ & Med. $\Gap$ \\
    \midrule
    LH GPT-4o-mini & Full & 127 & 116 & 91.3\% & 23 & 8 & 12 \\
    LH GPT-4o-mini & Conservative & 127 & 115 & 90.6\% & 23 & 9 & 9 \\
    LH GPT-4o & Full & 147 & 144 & 98.0\% & 24 & 14 & 7 \\
    LH GPT-4o & Conservative & 147 & 118 & 80.3\% & 24 & 17 & 2 \\
    LH Haiku 4.5 & Full & 91 & 91 & 100.0\% & 23 & 14 & 9 \\
    LH Haiku 4.5 & Conservative & 91 & 91 & 100.0\% & 23 & 18 & 4 \\
    LH Sonnet 4.6 & Full & 89 & 89 & 100.0\% & 23 & 14 & 8 \\
    LH Sonnet 4.6 & Conservative & 89 & 89 & 100.0\% & 23 & 18 & 2 \\
    LH all & Full & 454 & 440 & 96.9\% & 23 & 12 & 9 \\
    LH all & Conservative & 454 & 413 & 91.0\% & 23 & 16.5 & 3 \\
    \midrule
    Bank GPT-4o-mini & Conservative & 99 & 45 & 45.5\% & 6 & 6 & 0 \\
    Bank Haiku 4.5 & Conservative & 54 & 10 & 18.5\% & 2 & 2 & 0 \\
    Bank Sonnet 4.6 & Conservative & 78 & 23 & 29.5\% & 5 & 5 & 0 \\
    Bank all & Conservative & 231 & 78 & 33.8\% & 5 & 5 & 0 \\
    \bottomrule
  \end{tabular}
\end{table}

The model-level rows provide clarity to two different aspects. First, there is no attack success that is responsible for long-horizon decoupling; for instance, while Anthropic models have a near-zero attack success, their decoupling value is still very high. In addition, it is possible to see that the level of banking decoupling depends on infrastructure, which means that, for instance, the Haiku banking traces are the shortest ones with the lowest decoupling levels. On the contrary, GPT-4o-mini produces longer traces, offering more opportunities for provenance edges to shorten the route.

\section{Statistical and Reproducibility Details}

Wilson intervals are used for proportions because several rates lie near zero or one. The E2 logistic regression remains restricted to the 274 OpenAI pairs: adding backends with zero or nearly zero successes causes complete or near-complete separation under ordinary maximum likelihood. We therefore avoid presenting unstable pooled coefficients as if they were comparable estimates. The regression uses the full-graph gap and controls for sequence distance, a binary delayed-attack indicator, and backend.

The gate evaluation is a paired counterfactual replay. For a fixed threshold, the sequence and influence gates see the same recorded decision. Since $\mathcal{B}_T(k)\subseteq\mathcal{B}_I(k)$ analytically, discordance can occur only when the influence gate blocks an additional action. We report those additional actions separately for attack and benign decisions rather than claiming an unconstrained head-to-head win rate.

\paragraph{Code availability.}
The code required to reproduce the analyses and figures will be released publicly upon acceptance.

\end{document}